\def\R{{\mathbb R}}
\newtheorem{theorem}{Theorem}[section]
\newtheorem{definition}[theorem]{Definition}
\newcommand\set[1]{\left\{ #1 \right\}}
\newcommand\sett[2]{\left\{ \left. #1 \;\right\vert #2 \right\}}
\newcommand\floor[1]{\left\lfloor #1 \right\rfloor}
\newcommand\remove[1]{}
\newcommand\erdos{{Erd{\H o}s}\xspace}
\newcommand\szemeredi{{Szemer\'edi}\xspace}
\newcommand\Cst{C_{\mathrm{SzTr}}}
\newcommand\cst{c_{\mathrm{SzTr}}}
\newcommand\Elekes{\mathrm{Elekes}}
\newcommand\Erdos{\mathrm{Erdos}}
\begin{document}

\title{The Constant of Proportionality in Lower Bound Constructions of
Point-Line Incidences}
\author{Roel Apfelbaum}

\maketitle

\begin{abstract}
Let $I(n,l)$ denote the maximum possible number of incidences between $n$ points
and $l$ lines.
It is well known that $I(n,l) = \Theta(n^{2/3}l^{2/3} + n + l)$
\cite{ST,Er-points-lines,El-sums}.
Let $\cst$ denote the lower bound on the constant of proportionality of the
$n^{2/3}l^{2/3}$ term.
The known lower bound, due to Elekes \cite{El-sums}, is
$\cst \ge 2^{-2/3} = 0.63$.
With a slight modification of Elekes' construction, we show that it can give a
better lower bound of $\cst \ge 1$, i.e., $I(n,l) \ge n^{2/3}l^{2/3}$.
Furthermore, we analyze a different construction given by \erdos
\cite{Er-points-lines}, and show its constant of proportionality to be even
better, $\cst \ge 3/(2^{1/3}\pi^{2/3}) \approx 1.11$.
\end{abstract}

\section{Overview}
Let $P$ be a set of $n$ points in $\R^2$, and let $L$ be a family of $l$ lines
in $\R^2$.
We denote the number of incidences between these points and lines by $I(P,L)$.
We denote by $I(n,l)$ the maximum of $I(P,L)$ over all sets $P$ of $n$ points,
and families $L$ of $l$ lines.
The \szemeredi-Trotter bound \cite{ST} asserts that
$I(n,l) = O(n^{2/3}l^{2/3} + n + l)$
(See also \cite{CEGSW, Sz} for simpler proofs).
For values of $n$ and $l$ such that $\sqrt{n} \le l \le n^2$, the
$n^{2/3}l^{2/3}$ term dominates, so the bound becomes
$I(n,l) = O(n^{2/3}l^{2/3})$.
In more detail, we have:
\begin{theorem}[\szemeredi and Trotter \cite{ST}] \label{thm:st}
There exists a constant $\Cst$ such that, for any set $P$ of $n$ points, and any
family $L$ of $l$ lines, if $\sqrt{n} \le l \le n^2$, then the number of
incidences between the points and lines is at most
\[
I(P,L) \le \Cst n^{2/3} l^{2/3}.
\]
\end{theorem}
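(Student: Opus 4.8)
The plan is to use Székely's crossing-number argument, which proves Theorem~\ref{thm:st} with a clean absolute constant. First I would discard every line of $L$ incident to at most one point of $P$; together these contribute at most $l$ incidences, which will be absorbed into the final bound. Writing $L'$ for the remaining lines and $I'$ for the number of incidences between $P$ and $L'$, we have $I(P,L) \le I' + l$, so it suffices to bound $I'$.

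Next I would encode the incidence structure as a plane graph $G$ on the vertex set $P$: for each line $\lambda \in L'$, order the points of $P$ lying on $\lambda$ along the line and join each consecutive pair by the segment between them. A line carrying $k_\lambda \ge 2$ points contributes $k_\lambda - 1$ edges, and since a segment determines a unique line, $G$ is simple. The edge count is therefore
\[
|E(G)| = \sum_{\lambda \in L'} (k_\lambda - 1) = I' - |L'| \ge I' - l .
\]
Because every crossing in this drawing occurs at an intersection point of two lines of $L$, the number of crossings is at most $\binom{l}{2} < l^2/2$.

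Then I would invoke the crossing lemma: any simple graph drawn in the plane with $m$ vertices and $e \ge 4m$ edges has at least $c\,e^3/m^2$ crossings for an absolute constant $c>0$. Applied to $G$ with $m=n$ and $e=|E(G)|$, and combined with the crossing bound $l^2/2$, this gives $c\,|E(G)|^3/n^2 \le l^2/2$, whence $|E(G)| = O\!\left(n^{2/3}l^{2/3}\right)$ and so $I' = O\!\left(n^{2/3}l^{2/3}\right)$. The degenerate case $|E(G)| < 4n$ is handled separately: there $I' < 4n + l$, and under the hypothesis $\sqrt{n}\le l\le n^2$ both linear terms are already $O\!\left(n^{2/3}l^{2/3}\right)$, since $l\ge\sqrt{n}$ gives $n^{2/3}l^{2/3}\ge n$ and $l\le n^2$ gives $n^{2/3}l^{2/3}\ge l$. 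Assembling the cases yields $I(P,L)\le \Cst\, n^{2/3}l^{2/3}$ for a suitable $\Cst$.

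The main obstacle is the crossing lemma itself, which is the real engine here; the rest is bookkeeping. Its standard proof begins with the Euler-formula inequality $\mathrm{cr}(G)\ge |E(G)| - 3m + 6$ for simple graphs and amplifies it via a random vertex-sampling (probabilistic deletion) argument. I would cite \cite{Sz} for this step, or reproduce the short sampling proof. I would also stress that the value of $\Cst$ obtained this way is controlled entirely by the constant $c$ in the crossing lemma, which is precisely why pinning down the optimal constant is delicate and why the matching lower-bound constructions studied in this paper are of interest.
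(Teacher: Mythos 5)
Your proposal is correct, but note the context: the paper never proves Theorem~\ref{thm:st} at all. It is quoted as known background, attributed to \cite{ST}, with \cite{CEGSW} and \cite{Sz} cited as sources of simpler proofs; the paper's own contribution lies entirely in the lower-bound constructions. What you have written is precisely Sz\'ekely's crossing-number proof, i.e., the argument of \cite{Sz} that the paper points to. The details check out: the graph you build is simple because two points determine a unique line, so each edge arises from exactly one line; edges along the same line are interior-disjoint, and two distinct lines contribute at most one crossing, giving the bound $\binom{l}{2} < l^2/2$ on crossings in the drawing; and in the sparse case $|E(G)| < 4n$ the hypothesis $\sqrt{n} \le l \le n^2$ is exactly what absorbs both linear terms, since $l \ge \sqrt{n}$ gives $n \le n^{2/3}l^{2/3}$ and $l \le n^2$ gives $l \le n^{2/3}l^{2/3}$. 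Your closing remark is also consistent with the paper, which records that this route (with the sharpened crossing lemma of \cite{PRTT}) yields the currently best explicit constant $\Cst \le 2.5$.
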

\noindent
The known upper bound on $\Cst$ at present, due to Pach et al. \cite{PRTT}, is
$\Cst \le 2.5$.
The bound of Theorem \ref{thm:st} is asymptotically tight, as shown in different
lower bound constructions by \erdos \cite{Er-points-lines} and Elekes
\cite{El-sums}.
We state this claim more formaly as follows.
\begin{theorem}[\erdos \cite{Er-points-lines}, Elekes \cite{El-sums}]
\label{thm:low}
There exists a constant $\cst > 0$, such that, for infinitely many values of $n$
and $l$, where $\sqrt{n} \le l \le n^2$, there exist pairs $(P,L)$, where
$P$ is a set of $n$ points, and $L$ is a family of $l$ lines, such that the
number of incidences between the points and lines is at least
\[
I(P,L) \ge \cst n^{2/3} l^{2/3}.
\]
\end{theorem}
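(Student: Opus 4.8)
The plan is to exhibit an explicit point-line configuration realizing $\Omega(n^{2/3}l^{2/3})$ incidences; since the statement only demands \emph{some} infinite family of pairs $(n,l)$ with $\sqrt{n}\le l\le n^2$, a single one-parameter grid construction (essentially Elekes' construction) suffices. Fixing an integer $k$, I would take $P$ to be the integer grid $\set{1,\dots,k}\times\set{1,\dots,2k^2}$, so that $n=\card{P}=2k^3$. For the lines I would use all lines $y=ax+b$ with integer slope $a\in\set{1,\dots,k}$ and integer intercept $b\in\set{1,\dots,k^2}$; distinct pairs $(a,b)$ give distinct lines, so $l=k^3$.

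The key computation is to count, for each such line, how many points of $P$ it meets. First I would note that for every $x\in\set{1,\dots,k}$ the value $y=ax+b$ is a positive integer lying in the admissible range, since $ax+b\le k\cdot k+k^2=2k^2$. Hence each line passes through exactly the $k$ grid points indexed by $x\in\set{1,\dots,k}$, and since an incidence is a (point, line) pair, summing over all lines gives $I(P,L)=l\cdot k=k^4$.

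It then remains to compare with the target quantity. Substituting $n=2k^3$ and $l=k^3$ yields $n^{2/3}l^{2/3}=(2k^3)^{2/3}(k^3)^{2/3}=2^{2/3}k^4$, so that $I(P,L)=k^4=2^{-2/3}\,n^{2/3}l^{2/3}$, establishing the bound with $\cst=2^{-2/3}$. The range hypothesis holds for all large $k$, because $\sqrt{n}=\sqrt{2}\,k^{3/2}\le k^3=l\le 4k^6=n^2$, and there are infinitely many such $k$, which is exactly what the statement requires.

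I expect no serious obstacle, as the argument is purely constructive; the only point needing care is the verification that every one of the $l$ lines meets the grid in its full complement of $k$ points, which is precisely what the choice of vertical extent $2k^2$ guarantees. If one prefers to cover a broad range of ratios $l/n$ rather than a single ray of $(n,l)$ values, I would replace the square grid by the rectangular grid $\set{1,\dots,r}\times\set{1,\dots,2pr}$ together with slopes in $\set{1,\dots,p}$ and intercepts in $\set{1,\dots,pr}$; a short computation shows this two-parameter family again yields the constant $2^{-2/3}$ while letting $l/n$ range over a broad interval, producing infinitely many admissible pairs throughout the regime $\sqrt{n}\le l\le n^2$.
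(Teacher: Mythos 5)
Your proposal is correct and is essentially the paper's own justification of this theorem: it is exactly Elekes' construction (the paper's $P=\set{1,\ldots,k}\times\set{1,\ldots,2km}$, $L$ the lines $y=ax+b$ with $a\in\set{1,\ldots,m}$, $b\in\set{1,\ldots,km}$) specialized to $m=k$, with your closing remark recovering the full two-parameter family, and it yields the same constant $\cst\ge 2^{-2/3}$ on the same ray of admissible pairs $(n,l)$.
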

\noindent
The known lower bound on $\cst$, due to Elekes \cite{El-sums}, is
$\cst \ge 2^{-2/3} = 0.63$.

In this paper we improve the estimate of $\cst$.
We modify Elekes' construction, and show that this modification gives a lower
of $\cst \ge 1$.
Next, we analyze the construction of \erdos \cite{Er-points-lines}, and show its
constant of proportionality to be even better,
$\cst \ge 3/(2^{1/3}\pi^{2/3}) \approx 1.11$.
This is an improvement upon a previous analysis of the \erdos construction
\cite{PT}, which gives the bound $\cst \ge (3/(4\pi^2))^{1/3} \approx 0.42$.

\section{The Elekes construction}
\begin{figure}
\begin{center}
\includegraphics[width=400pt,height=270pt]{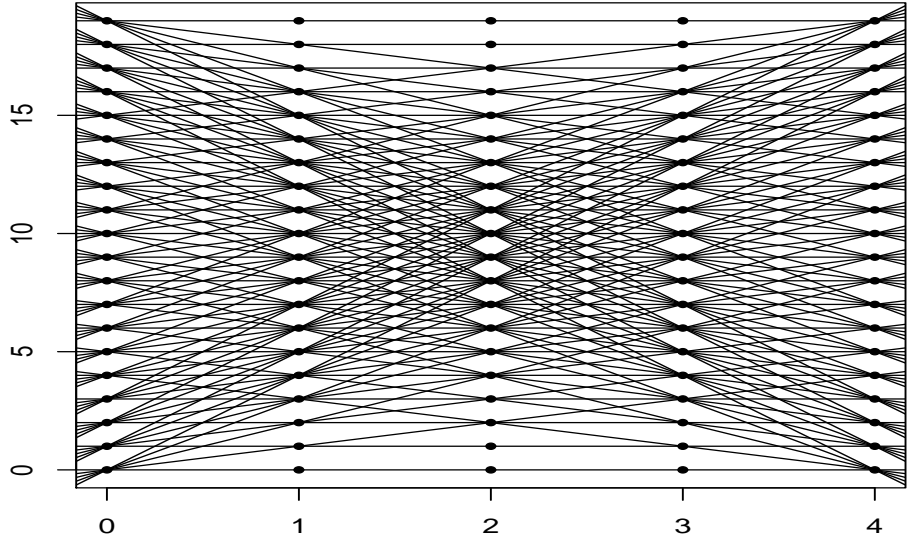}
\end{center}
\caption{\sf An $\Elekes(5,4)$ configuration. $n = 100$ points, $l = 100$ lines,
and $I = 500$ incidences.} \label{fig:elekes}
\end{figure}
Elekes \cite{El-sums} gave the following lower bound construction.
Let $k$ and $m$ be some positive integers.
Put $P = \set{1, \ldots, k} \times \set{1, \ldots, 2km}$, and put $L$ to be all
lines $y = ax + b$, where $a \in \set{1, \ldots, m}$, and
$b \in \set{1, \ldots, km}$.
There are $n = |P| = 2k^2m$ points and $l = |L| = km^2$ lines here, and each
line is incident to exactly $k$ points, so $I = I(P,L) = k^2m^2$.
It is then easy to verify that $I = 2^{-2/3}n^{2/3}l^{2/3}$, and also, whenever
$m > 1$, that $\sqrt{n} \le l \le n^2$.
This gives a lower bound on the $\cst$ constant from Theorem \ref{thm:low} of
$\cst \ge 2^{-2/3} \approx 0.63$.

We present a slightly different construction from the above.
It is similar in principle, but more exhaustive.
\begin{definition}
Let $k$ and $m$ be some positive integers.
We denote by
$$\Elekes(k,m) = (P,L)$$
the following set of points $P$, and family of lines $L$.
$P$ is defined as a $k \times km$ lattice section:
\[
P = \set{0, \ldots, k-1} \times \set{0, \ldots, km-1},
\]
and $L$ is defined as all $x$-monotone lines that contain $k$ points of $P$.
\end{definition}
With this definition of $\Elekes(k,m)$, we have $I(P,L) \ge |P|^{2/3}|L|^{2/3}$,
and hence, $\cst \ge 1$. More formally:
\begin{theorem}
Let $P$ and $L$ respectively be the points and lines of an $\Elekes(k,m)$
configuration, for some positive integers $k>1$ and $m$.
Let us denote the number of points by $|P| = n$, the number of lines by
$|L| = l$, and the number of incidences between them by $I(P,L) = I$.
Then $I \ge n^{2/3}l^{2/3}$.
\end{theorem}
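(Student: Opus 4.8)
The plan is to reduce everything to counting, in closed form, the three quantities $n=|P|$, $l=|L|$, and $I=I(P,L)$ for the $\Elekes(k,m)$ configuration, and then to verify the single inequality $I \ge n^{2/3}l^{2/3}$ directly. The point count is immediate: $n = k \cdot km = k^2m$. The real work is computing $l$ and $I$. An $x$-monotone line through the lattice $P = \set{0,\ldots,k-1}\times\set{0,\ldots,km-1}$ that contains exactly $k$ points must hit every one of the $k$ columns $x=0,1,\ldots,k-1$ at a lattice point, so it has a rational slope $a = p/q$ in lowest terms with $q \mid (k-1)$; equivalently its direction vector in reduced form $(q,p)$ must have $q$ dividing $k-1$ so that advancing $k-1$ steps in $x$ stays commensurate with the lattice. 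I would parametrize each admissible line by its reduced slope $(q,p)$ together with its intercept, count how many intercepts keep all $k$ points inside the vertical range $\set{0,\ldots,km-1}$, and sum over all admissible slopes.

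First I would fix a reduced direction $(q,p)$ with $q \ge 1$, $\gcd(p,q)=1$, and require $q \mid (k-1)$ so that the $x$-span $k-1$ of the $k$ lattice points is a multiple of the horizontal period $q$ of the direction. The vertical displacement across the full line is then $(k-1)p/q$, and the number of valid integer vertical placements (intercepts) is $km - (k-1)|p|/q$, provided this is positive; this gives the count of lines in direction $(q,p)$. Summing over all valid $(p,q)$ — and remembering to include both positive and negative slopes as well as the horizontal lines — yields $l$. Since every such line carries exactly $k$ incidences by construction, we get $I = kl$ for free once $l$ is known, which is the crucial structural simplification: the incidence count is just $k$ times the line count, so the target inequality $kl \ge (k^2m)^{2/3} l^{2/3}$ collapses to $l^{1/3} \ge k^{1/3}m^{2/3}$, i.e. $l \ge k m^2$.

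Thus the entire theorem reduces to the clean claim $l \ge km^2$, and the main obstacle is obtaining a clean lower bound on the slope sum. I would lower-bound $l$ by restricting attention to directions with $q \mid (k-1)$ and summing $km - (k-1)|p|/q$ over admissible $p$; the dominant contribution comes from small slopes where the intercept count is close to $km$. The delicate point is handling the $\gcd(p,q)=1$ coprimality constraint and the floor implicit in "number of valid intercepts," since a naive count must be shown to exceed $km^2$ rather than merely match it asymptotically — the strict improvement over Elekes' original $2^{-2/3}$ comes precisely from this more exhaustive family of admissible slopes. I expect the cleanest route is to exhibit enough lines by hand: for each of the $m$ relevant magnitudes of slope numerator and each admissible small denominator one gets on the order of $km$ intercepts, and a careful bookkeeping of these contributions should yield at least $km^2$ lines. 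Once $l \ge km^2$ is established, the inequality $I = kl \ge n^{2/3}l^{2/3}$ follows by the algebra above, completing the proof that $\cst \ge 1$.
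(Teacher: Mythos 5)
Your opening reduction is correct and coincides with the paper's: with $n = k^2m$ and the fact that every line of $L$ carries exactly $k$ incidences, so $I = kl$, the target inequality $I \ge n^{2/3}l^{2/3}$ is equivalent to $l \ge km^2$. The problems come after that. Your characterization of $L$ is wrong: you correctly note that a line containing $k$ points of $P$ must meet \emph{every} column $x = 0,1,\ldots,k-1$ in a lattice point, but the consequence of that is that the slope is an \emph{integer}, not merely a rational $p/q$ with $q \mid (k-1)$. Indeed, the points of the line in columns $x$ and $x+1$ both have integer $y$-coordinates and their $x$-coordinates differ by $1$, so the slope equals their $y$-difference, an integer; that is, $q=1$. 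A line with reduced denominator $q>1$ meets only the columns in one residue class mod $q$, hence contains only $(k-1)/q + 1 < k$ points of $P$ and is not in $L$ at all. Counting such lines inflates $l$ (so a lower bound on your enlarged family does not bound the true $l$ from below) and simultaneously breaks the identity $I = kl$ on which your reduction rests. Your remark that the improvement over Elekes' $2^{-2/3}$ ``comes precisely from this more exhaustive family of admissible slopes'' is a misdiagnosis: the paper's lines still all have integer slope; the gain comes from using a $k \times km$ lattice (half the height of Elekes' original) together with all integer slopes, positive and negative, and all admissible intercepts.

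The second gap is that the inequality $l \ge km^2$ --- which, after your correct algebra, \emph{is} the theorem --- is never proved: you defer it to ``careful bookkeeping'' that ``should yield'' the bound. The paper settles it in two lines by summing in the opposite order from yours: fix the intercept $b \in \{0,\ldots,km-1\}$ (there are $km$ choices); the admissible integer slopes $a$ are exactly those with $0 \le a(k-1)+b \le km-1$, an interval of length $(km-1)/(k-1) = m + (m-1)/(k-1)$, which therefore contains at least $m + \floor{(m-1)/(k-1)} \ge m$ integers; hence $l \ge km \cdot m = km^2$. Your slope-first order of summation would also work if restricted to integer slopes, since $\sum_{a \in \Z} \max\bigl(0, km - (k-1)|a|\bigr) \ge km^2$, but as written the proposal has neither the correct family of lines nor a completed count, so the central inequality remains unestablished.
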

\begin{proof}
The lines of $L$ have the form $y = ax + b$ with integer parameters as follows.
The $b$ parameter is an integer in the range
\[
0 \le b \le km-1,
\]
and the $a$ parameter, given $b$, is restricted as follows.
For $x=k-1$ we have $0 \le a(k-1) + b \le km-1$, or
\[
-\frac{b}{k-1} \le a \le m + \frac{m-1}{k-1} - \frac{b}{k-1}.
\]
The difference between the upper and lower bounds of $a$ is $m + (m-1)/(k-1)$,
and the number of integer values in this range is either
$m + \floor{(m-1)/(k-1)}$, or $m + 1 + \floor{(m-1)/(k-1)}$.
The latter case happens about $1 + ((m-1) \mod (k-1))$ out of $k-1$ times.
The number of lines, resulting from multimplying the number of $b$-values by the
number of $a$-values, is
\[
l \approx km \left(m + \floor{\frac{m-1}{k-1}}
				+ \frac{1 + ((m-1) \mod (k-1))}{k-1}\right),
\]
and in any event it is greater than $km^2$,
\[
l \ge km^2.
\]
The number of points is
\[
n = k^2m.
\]
It then follows that
\[
k \ge \frac{n^{2/3}}{l^{1/3}}.
\]
Since each line is incident to $k$ points, the number of incidences comes out
\[
I = lk \ge n^{2/3}l^{2/3}
\]
as claimed. This completes the proof.
\end{proof}
From this theorem it follows that $\cst \ge 1$.
Note that an $\Elekes(k,k-1)$ has an equal number of points and lines,
$n = l = k^2(k-1)$, and $I = k^3(k-1) \approx n^{4/3}$ incidences.

\section{The \erdos construction}
\erdos \cite{Er-points-lines} considered $n$ points on a
$n^{1/2} \times n^{1/2}$ lattice section, together with the $n$ lines that
contain the most points.
He noted that there are $\Theta(n^{4/3})$ incidences in this configuration, and
conjectured that it is asymptotically optimal.
His conjecture was settled in the affirmative as a corollary of the
\szemeredi-Trotter bound \cite{ST}.
Pach and T\'oth \cite{PT} analyzed, in more generality, the square lattice
section together with the lines with the most incidences, where the number of
lines $l$ is not necessarily equal to the number of points $n$.
Their analysis yielded the bound $I \ge 0.42n^{2/3}l^{2/3}$.
In this section we will analyze the same setting in a different way and get an
improved bound of $I \ge 1.11n^{2/3}l^{2/3}$, i.e., $\cst \ge 1.11$.

First, we give a formal definition of the \erdos construction.

\begin{definition}
For two positive integers $k$ and $m$, we denote by $$\Erdos(k,m) = (P,L)$$
the following set of points $P$, and family of lines $L$.
We put $P$ to be a $k \times k$ lattice section:
\[
P = \set{0, \ldots, k-1}^2.
\]
Next, we put $L$ to be all lines of the form $ax + by = c$ that pass through the
bounding square of $P$, where:
\begin{enumerate}
\item $a$, $b$, and $c$ are integers.
\item $a$ and $b$ are coprime.
\item $a \ge 0$.
\item $|a|+|b| \le m$.
\end{enumerate}
\end{definition}
Under this definition, $L$ is not quite the family of lines with the most
incidences with respect to $P$, but rather, an approximation of it.
Indeed, there are lines here, such as $x + y = 0$, with  just one incidence.
There are even lines with no incidences, like $2x + 3y = 1$ (this line exists
whenever $k \ge 2$, and $m \ge 5$).
However, most lines do have many incidences,
%and a line $ax + by = c$, on
%average has about $k/(|a|+|b|)$ incidences.
which gives us the following result.
\begin{figure}
\begin{center}
\includegraphics[width=360pt,height=400pt]{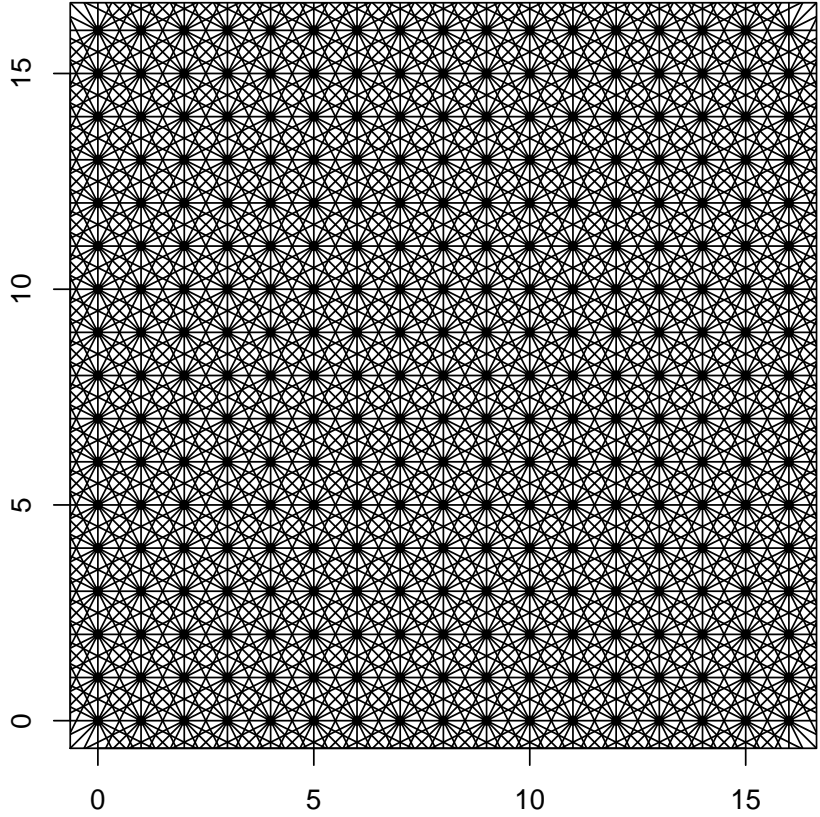}
\end{center}
\caption{\sf An $\Erdos(17,3)$ configuration. $n = 289$ points, $l = 296$ lines,
and $I = 2312$ incidences.} \label{fig:erdos}
\end{figure}
\begin{theorem} \label{thm:erd}
Let $P$ and $L$ respectively be the points and lines of an $\Erdos(k,m)$
configuration, for some positive integers $k$ and $m$.
Let us denote the number of points by $|P| = n$, the number of lines by
$|L| = l$, and the number of incidences between them by $I(P,L) = I$.
Then $I \approx \frac{3}{2^{1/3}\pi^{2/3}} n^{2/3}l^{2/3}$.
\end{theorem}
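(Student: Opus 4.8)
The plan is to reduce everything to two sums over primitive (coprime) lattice points and then evaluate those sums asymptotically. Trivially $n=k^2$. The key observation for the incidences is that the admissible line directions are exactly the primitive integer normals $(a,b)$ with $a\ge 0$ and $|a|+|b|\le m$; write $D$ for this set of directions. For a \emph{fixed} direction $(a,b)\in D$, each of the $k^2$ points of $P$ lies on exactly one line $ax+by=c$, and that line automatically meets the bounding square (it contains a point of $P$) and so belongs to $L$. Hence every direction contributes exactly $k^2$ incidences, giving the exact identity
\[
I = k^2\,|D|.
\]
(Empty lines such as $2x+3y=1$ are counted in $l$ but contribute nothing here, which is consistent.) As a check, in the $\Erdos(17,3)$ configuration one finds $|D|=8$, matching $I=289\cdot 8=2312$.

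Next I would count the lines. Fixing $(a,b)\in D$, the quantity $ax+by$ ranges over an interval of length $(|a|+|b|)(k-1)$ as $(x,y)$ runs over $[0,k-1]^2$, with its extreme values attained at corners and hence integral; so the number of integers $c$ for which $ax+by=c$ meets the square is exactly $(|a|+|b|)(k-1)+1$. Summing over $D$ gives the exact formula
\[
l = \sum_{(a,b)\in D}\bigl[(|a|+|b|)(k-1)+1\bigr]
  = (k-1)\!\!\sum_{(a,b)\in D}\!(|a|+|b|)\;+\;|D|.
\]
Thus the whole problem reduces to estimating two quantities over the primitive points of $R=\set{(a,b):a\ge 0,\ |a|+|b|\le m}$: the count $|D|$ and the weighted sum $\Sigma=\sum_{(a,b)\in D}(|a|+|b|)$.

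The asymptotic evaluation of $|D|$ and $\Sigma$ is the analytic heart of the argument and is where I expect the real work. I would invoke the classical density $1/\zeta(2)=6/\pi^2$ of primitive lattice points, together with M\"obius inversion to handle the weight: writing $\sum_{\gcd(a,b)=1}=\sum_{d\ge1}\mu(d)\sum_{d\mid a,\,d\mid b}$ and using that $|a|+|b|$ is homogeneous of degree $1$, the $d$-sum collapses to a factor $\sum_d\mu(d)/d^2=1/\zeta(2)$ times the corresponding integral. Since $R$ is half of the $L_1$-ball of radius $m$, it has area $m^2$ and $\iint_R(|a|+|b|)\,da\,db=\tfrac23 m^3$, so I would obtain
\[
|D|\approx\frac{6}{\pi^2}\,m^2,\qquad
\Sigma\approx\frac{6}{\pi^2}\cdot\frac{2m^3}{3}=\frac{4m^3}{\pi^2}.
\]
The main obstacle is making these rigorous with genuinely lower-order error terms (of order $O(m\log m)$ for $|D|$ and $O(m^2\log m)$ for $\Sigma$), and verifying that the discarded $|D|$ term in $l$, as well as the boundary ambiguity at $a=0$ (where only $(0,\pm1)$ are primitive and represent a single direction), are negligible; all of these are lower order provided $m\to\infty$ and $km\to\infty$.

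Finally I would substitute $I\approx 6k^2m^2/\pi^2$, $l\approx 4km^3/\pi^2$, and $n=k^2$. Computing $n^{2/3}l^{2/3}$ produces the factor $4^{2/3}/2^{1/3}=2$ together with the power $\pi^{-2}$, so that
\[
\frac{3}{2^{1/3}\pi^{2/3}}\,n^{2/3}l^{2/3}
 = \frac{6k^2m^2}{\pi^2}\approx I,
\]
which is exactly the claimed relation, completing the proof.
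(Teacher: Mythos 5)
Your proposal is correct and follows essentially the same route as the paper: $n=k^2$, incidences counted as $k^2$ times the number of primitive directions, and lines counted by summing $(|a|+|b|)(k-1)+1$ over primitive pairs, yielding $I \approx 6k^2m^2/\pi^2$ and $l \approx 4km^3/\pi^2$, which combine to the claimed constant. The only difference is one of rigor in presentation: you state the incidence count as an exact identity $I=k^2|D|$ and evaluate the two primitive-pair sums by M\"obius inversion against integrals over the half $L_1$-ball, whereas the paper applies the $6/\pi^2$ coprimality density level-by-level (about $12j/\pi^2$ coprime pairs with $|a|+|b|=j$) --- the same sums and the same asymptotics.
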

The notation $\Phi \approx \Psi$, where both expressions depend on some
set of variable $x_1, x_2, \ldots$ is shorthand for
$\lim_{x_1 \to \infty, x_2 \to \infty, \ldots} (\Phi / \Psi) = 1$.
That is, as the independent variables (in the case of Theorem \ref{thm:erd}, $k$
and $m$) grow larger and larger, the ratio between the two expressions ($I$ and
$\frac{3}{2^{1/3}\pi^{2/3}} n^{2/3}l^{2/3}$, in the case of
Theorem \ref{thm:erd}) gets closer and closer to one.
\begin{proof}
The number of points is $n = k^2$.
The probability of a random pair $(a,b)$ to be coprime is about
$\frac{6}{\pi^2}$ \cite{Coprime}.
There are $(m+1)^2$ integer pairs in the range
$\sett{(a,b)}{|a|+|b| \le m, a \ge 0}$,
so there are about $\frac{6m^2}{\pi^2}$ coprime pairs.
Each pair $(a,b)$ determines the direction of a pencil of parallel lines,
$ax + by = c$, and each of the $k^2$ points is incident to a line in each of
these directions.
That is, each point is incident to about $\frac{6m^2}{\pi^2}$ lines, so in total
\[
I \approx \frac{6k^2m^2}{\pi^2}.
\]
It remains to estimate the number of lines.
Consider a positive coprime pair $(a,b)$. This pair generates lines
$ax + by = c$, where:
\begin{enumerate}
\item The minimal value of $c$ is $0$, and the line $ax + by = 0$ passes through
      $(0,0) \in P$.
\item The maximal value of $c$ is $(a+b)(k-1)$, and the line
      $ax + by = (a+b)(k-1)$ passes through $(k-1,k-1) \in P$.
\end{enumerate}
It follows that there are $(|a|+|b|)(k-1)+1$ values of $c$ that generate lines
that pass through the square.
This number of lines is true also for negative $b$ with a different range of
$c$-values.
The total number of lines $|L| = l$ is thus
\begin{eqnarray}
l & = & \sum_{a,b} ((|a|+|b|)(k-1)+1) \label{eq:l1} \\
	& \approx & \sum_{j=1}^m\sum_{|a|+|b|=j} j(k-1) + \frac{6m^2}{\pi^2} \label{eq:l2} \\
	& \approx & \sum_{j=1}^m \frac{12j}{\pi^2}j(k-1) + \frac{6m^2}{\pi^2} \label{eq:l3} \\
	& \approx & \frac{12(k-1)}{\pi^2}\sum_{j=1}^m j^2 + \frac{6m^2}{\pi^2} \label{eq:l4} \\
	& \approx & \frac{4m^3(k-1)}{\pi^2} + \frac{6m^2}{\pi^2}. \label{eq:l5}
\end{eqnarray}
(\ref{eq:l1}) is a sum over all coprime pairs $(a,b)$ as above.
(\ref{eq:l2}) is the same sum in a different order of summation.
In (\ref{eq:l3}) we estimate the number of coprime pairs $(a,b)$ such that
$|a|+|b|=j$ as follows. There are $2j+1$ integer pairs $(a,b)$, such that
$a \ge 0$ and $|a|+|b|=j$, and the probability of a pair from this subset
to be coprime is, as already noted, $6/\pi^2$, so there should be an
expected number of $(12j + 6)/\pi^2 \approx 12j/\pi^2$ coprime pairs.
In (\ref{eq:l5}) we use the approximation $\sum_{j=1}^m j^2 = m(m+1)(2m+1)/6
\approx m^3/3$.
The dominant term in the final equation is
\[
l \approx \frac{4m^3k}{\pi^2}.
\]
From the values of $n, l$, and $I$ in terms of $k$ and $m$, we get that
\[
I \approx \frac{3}{2^{1/3}\pi^{2/3}}n^{2/3}l^{2/3}
\]
as claimed. This copmletes the proof.
\end{proof}
From Theorem \ref{thm:erd} it follows that
$\cst \ge \frac{3}{2^{1/3}\pi^{2/3}} \approx 1.11$.

\bibliographystyle{plain}

\end{document}